\newcommand{\E}[3]{\mathbb{E}^{#2} _{#1}  \left[ #3 \right]}
\newcommand{\Prob}[3]{\mathbb{P}^{#2} _{#1} \left\{ #3 \right\}}
\newcommand{\qqquad}{\qquad \quad}
\newcommand{\mc}{\mathcal}
\newcommand{\td}{\Tilde}
\newcommand{\KL}[2]{D(#1||#2)}
\newcommand{\FAR}[2]{\text{FAR} ^{#1} \left( #2 \right)}
\newcommand{\WADD}[2]{\text{WADD} ^{#1} \left( #2 \right)}
\DeclareMathOperator*{\esssup}{ess\,sup}
\newtheorem{theorem}{Theorem}[section]
\newtheorem{lemma}[theorem]{Lemma}
\theoremstyle{definition}
\theoremstyle{remark}
\newtheorem*{remark}{Remark}    
\begin{document}

\title{Non-Parametric Quickest Detection of a Change in the Mean of an Observation Sequence\\
}

\author{\IEEEauthorblockN{Yuchen Liang}
\IEEEauthorblockA{\textit{ECE Department and Coordinated Science Lab} \\
\textit{University of Illinois at Urbana-Champaign}\\
Urbana, IL \\
yliang35@illinois.edu}
\and
\IEEEauthorblockN{Venugopal V. Veeravalli}
\IEEEauthorblockA{\textit{ECE Department and Coordinated Science Lab} \\
\textit{University of Illinois at Urbana-Champaign}\\
Urbana, IL \\
vvv@illinois.edu}
\thanks{This work  was supported in part by the National Science Foundation under grant  ECCS-2033900, and by the Army Research Laboratory under Cooperative Agreement W911NF-17-2-0196, through the University of Illinois at Urbana-Champaign.}
}

\maketitle

\begin{abstract}
We study the problem of quickest detection of a change in the mean of an observation sequence, under the assumption that both the pre- and post-change distributions have bounded support. We first study the case where the pre-change distribution is known, and then study the extension where only the mean and variance of the pre-change distribution are known. In both cases, no knowledge of the post-change distribution is assumed other than that it has bounded support. For the case where the pre-change distribution is known, we derive a test that asymptotically minimizes the worst-case detection delay over all post-change distributions, as the false alarm rate goes to zero.  We then study the limiting form of the optimal test  as the gap between the pre- and post-change means goes to zero, which we call the Mean-Change Test (MCT). We show that the MCT can be designed with only knowledge of the mean and variance of the pre-change distribution. We validate our analysis through numerical results for detecting a change in the mean of a beta distribution. We also demonstrate the use of the MCT for pandemic monitoring.

\end{abstract}

\begin{IEEEkeywords}
Quickest change detection (QCD), non-parametric methods, minimax robust detection
\end{IEEEkeywords}

\section{Introduction}
Quickest Change Detection (QCD) is a fundamental problem in mathematical statistics. Given a stochastic sequence whose distribution changes at some unknown change-point, the goal is to detect the change after it occurs as quickly as possible, subject to false alarm constraints. The QCD framework has had a wide range of applications, including but not limited to line-outage in power systems \cite{line-outage}, dim-target manoeuvre detection \cite{Molloy2017}, stochastic process control \cite{spc}, structural health monitoring \cite{shm}, and, more recently, in the Multi-Armed Bandit (MAB) problem with piece-wise stationary bandits \cite{kaufmann}. Two formulations are used in the classical QCD problem: the Bayesian formulation \cite{bayesqcd}, where the change-point is assumed to follow some prior distribution, and the minimax formulation \cite{lorden1971,pollak1987}, where the worst-case detection delay is minimized over all possible change-points, subject to false alarm constraints. In both the Bayesian and minimax settings, if the pre- and post-change distributions are known, low-complexity efficient solutions to the QCD problem can be found \cite{moulin-veeravalli-2018}.

In many practical situations, we may not know the exact distribution in the pre- or post-change regimes. Although it is reasonable to assume that we can obtain a large amount of data in the pre-change regime, this may not be the case for the post-change regime. In applications such as the epidemic detection problem and the MAB problem, a change in a specific statistic (e.g., the mean) of the distribution is of interest. This is different from the original QCD problem where any distributional change needs to be detected. Furthermore, in many applications, the support of the distribution is bounded. For example, the observations representing the fraction of some specific group in the entire population are bounded between 0 and 1. This is the case, for example, in the pandemic monitoring problem that we discuss in detail in Section IV.

There have been a number of lines of work on the QCD problem when the pre- and/or post-change distributions are not completely known. The most prevalent is the Generalized Likelihood Ratio (GLR) approach, in which the maximum of the test statistics corresponding to each possible change-points is used to construct the test. A GLR approach for a Gaussian mean-change detection problem is proposed in \cite{siegmund1995}. A GLR test for the case where the pre- and post-change distributions come from one-parameter exponential families is analyzed in \cite{LaiXing2010}. A kernel approach to construct GLR statistics is discussed in \cite{mmd}. More recently, a GLR mean-change test for sub-Gaussian observations using scan statistics is proposed and analyzed in \cite{maillard19a}, and this approach is applied to the MAB problem in \cite{kaufmann}. In all of these methods, the complexity of computing the test statistic at each time-step grows at least linearly with the number of samples. In practice, a windowed version of the GLR test statistic is used to keep the complexity small, while suffering some loss in performance. An alternative to the GLR approach was given in \cite{binning}, where a histogram is used to estimate the post-change distribution, and a low complexity test with a recursive structure is constructed.

Another line of work is the one based on a \textit{minimax} robust approach \cite{huber1965}, in which it is assumed that the distributions come from mutually exclusive uncertainty classes. Under certain conditions on the uncertainty classes, e.g., joint stochastic boundedness \cite{moulin-veeravalli-2018}, a low-complexity saddle-point solution to the minimax robust QCD problem can be found. In this paper, we use an asymptotic version of the minimax robust QCD problem formulation to develop algorithms for the non-parametric detection of a change in mean of an observation sequence. Our contributions are as follows:
\begin{enumerate}
    \item We study the problem of quickest detection of a change in the mean of an observation sequence under the assumption that no knowledge of the post-change distribution is available other than that it has bounded support.

    \item For the case where the pre-change distribution is known, we derive a test that asymptotically minimizes the worst-case detection delay over all possible post-change distributions, as the false alarm rate goes to zero.
    
\item     We study the limiting form of the optimal test as the gap between the pre- and post-change means goes to zero, which we call the Mean-Change Test (MCT). We show that the MCT can be designed with only knowledge of the mean and variance of the pre-change distribution. 

\item We validate our analysis through numerical results for detecting a change in the mean of a beta distribution. We also demonstrate the use of the MCT for pandemic monitoring.
  \end{enumerate}

\section{Problem Statement}
\label{sec:prob-stat}

Let $X_1,\dots,X_t,\dots \in [0,1]$ be independent samples, with $X_1,\dots,X_\nu \sim P_0$, and $X_{\nu+1},\dots \sim P_1$. Let $\Prob{\nu}{P_0,P_1}{\cdot}$ denote the probability measure on the entire sequence of observations when the pre- and post-change distributions are $P_0$ and $P_1$, respectively, and the change-point is at $t=\nu$, and  let $\E{\nu}{P_0,P_1}{\cdot}$ denote the corresponding expectation. We will first consider the case where $P_0$ is completely known, and then study extensions to the cases where only partial (moment) information about $P_0$ is available. The change-time $\nu$ is unknown but deterministic. The problem is to detect the change quickly while not causing too many false alarms. Let $\tau$ be a stopping time \cite{moulin-veeravalli-2018} defined on the observation sequence associated with the detection rule, i.e. $\tau$ is the time at which we stop taking observations and declare that the change has occurred. If both the pre- and post-change distributions are known, Lorden \cite{lorden1971} proposed solving the following optimization problem to find the best stopping time $\tau$:
\begin{equation}
    \inf_{\tau \in \mathcal{C}_\alpha} \WADD{P_0,P_1}{\tau}
\end{equation}
where
\begin{multline}
\label{orig_qcd}
    \WADD{P_0,P_1}{\tau} :=\\
    \sup_{\nu \geq 1} \esssup \E{\nu}{P_0,P_1}{\left(\tau-\nu+1\right)^+|X_1,\ldots, X_{\nu-1}}
\end{multline}
is a worst-case delay metric and
\begin{equation}
    \mathcal{C}^{\{P_0\}}_\alpha := \left\{ \tau: \FAR{P_0}{\tau} \leq \alpha \right\}
\end{equation}
is the feasible set where $\FAR{P_0}{\tau} := \E{\infty}{P_0,P_1}{\tau}^{-1}$. Note that $\E{\infty}{P_0,P_1}{\cdot}$ is the expectation operator when the change never happens, and $(\cdot)^+:=\max\{0,\cdot\}$.

Lorden also showed that Page's Cumulative Sum (CUSUM) algorithm \cite{page1954} whose test statistic is given by:
\begin{equation}
\label{cusum_stat}
\begin{split}
    \Lambda^{P_0,P_1}(t) &= \max_{1\leq k \leq t} \sum_{i=k}^t \ln L^{P_0,P_1}(X_i)\\
    &= \left(\Lambda^{P_0,P_1}(t-1) + \ln L^{P_0,P_1}(X_t) \right)^+
\end{split}
\end{equation}
with $\Lambda^{P_0,P_1}(0) = 0$ solves the problem in (\ref{orig_qcd}) asymptotically, where $L^{P_0,P_1}(x) = p_1(x)/p_0(x)$ is the likelihood ratio between the densities. The CUSUM stopping rule is given by:
\begin{equation}
\label{defstoppingrule}
    \tau\left(\Lambda^{P_0,P_1}, b_\alpha\right) := \inf \{t:\Lambda^{P_0,P_1}(t)\geq b_\alpha \}
\end{equation}
where $b_\alpha := |\ln \alpha|$. It was shown by Moustakides \cite{moustakides1986} that Page's test is exactly optimal for the problem in (\ref{orig_qcd}).

When the pre-change and post-change distributions are unknown but belong to some uncertainty sets, a minimax robust metric can be applied:
\begin{equation}
\label{mainprob}
    \inf_{\tau \in \mathcal{C}_{\alpha}^{\mc{P}_0}} \sup_{(P_0, P_1)\in \mathcal{P}_0 \times \mathcal{P}_1} \WADD{P_0,P_1}{\tau}
\end{equation}
where the feasible set becomes
\begin{equation}
\label{main_fa_constraint}
    \mathcal{C}_{\alpha}^{\mc{P}_0} = \left\{ \tau: \sup_{P_0 \in \mathcal{P}_0} \FAR{P_0}{\tau} \leq \alpha \right\}
\end{equation}

A pair of uncertainty sets $(\mc{P}_0,\mc{P}_1)$ is said to be \textit{jointly stochastically (JS) bounded} by $(\Bar{P}_0,\Bar{P}_1) \in \mc{P}_0 \times \mc{P}_1$ if, for any $(P_0,P_1) \in \mc{P}_0 \times \mc{P}_1$ and any $t > 0$,
\begin{align}
    P_0 \{L^{\Bar{P}_0,\Bar{P}_1}(X) > t\} &\leq \Bar{P}_0 \{L^{\Bar{P}_0,\Bar{P}_1}(X) > t\}\\
    P_1 \{L^{\Bar{P}_0,\Bar{P}_1}(X) > t\} &\geq \Bar{P}_1 \{L^{\Bar{P}_0,\Bar{P}_1}(X) > t\}
\end{align}
where $L^{\Bar{P}_0,\Bar{P}_1}$ is the likelihood ratio between $\Bar{P}_1$ and $\Bar{P}_0$ as defined previously \cite{moulin-veeravalli-2018}. The distributions $\Bar{P}_0$ and  $\Bar{P}_1)$ are called least favorable distributions (LFDs) within the classes ${\cal P}_0$ and ${\cal P}_1$, respectively. If the pair of pre- and post-change uncertainty sets is JS bounded, the test statistic $\Lambda^{\Bar{P}_0,\Bar{P}_1}(t)$ with the stopping rule $\tau(\Lambda^{\Bar{P}_0,\Bar{P}_1},b_\alpha)$ solves (\ref{mainprob}) exactly \cite{Unnikrishnan2011}.

A pair of uncertainty sets $(\mc{P}_0,\mc{P}_1)$ is said to be \textit{weakly stochastically (WS) bounded} by $(\td{P}_0,\td{P}_1) \in \mc{P}_0 \times \mc{P}_1$ if
\begin{equation}
\label{wsb:kl}
    \KL{\td{P}_1}{\td{P}_0} \leq \KL{P_1}{\td{P}_0} - \KL{P_1}{\td{P}_1}
\end{equation}
for all $P_1 \in \mc{P}_1$, and
\begin{equation}
\label{wsb:exp}
    \E{}{P_0}{L^{\td{P}_0,\td{P}_1}(X)} \leq \E{}{\td{P}_0}{L^{\td{P}_0,\td{P}_1}(X)} = 1
\end{equation}
for all $P_0 \in \mc{P}_0$ \cite{Molloy2017}, where $\E{}{P}{\cdot}$ denotes the expectation operator with respect to distribution $P$. It is shown in \cite{Molloy2017} that if a pair of uncertainty sets is JS bounded by $(\Bar{P}_0,\Bar{P}_1)$, it is also WS bounded by $(\Bar{P}_0,\Bar{P}_1)$.
\begin{theorem}
\label{wsb}
\cite{Molloy2017}\  If $(\td{P}_0,\td{P}_1)$ solves
\begin{equation}
\label{wsb:minkl}
    \inf_{(P_0,P_1) \in \mc{P}_0 \times \mc{P}_1} \KL{P_1}{P_0}
\end{equation}
and if furthermore, $\mc{P}_1$ is convex, and (\ref{wsb:exp}) holds for all $P_0 \in \mc{P}_0$, then $(\mc{P}_0, \mc{P}_1)$ is WS bounded by $(\td{P}_0,\td{P}_1)$, and $\tau(\Lambda^{\td{P}_0,\td{P}_1},b_\alpha)$ solves the problem in (\ref{mainprob}) asymptotically as $\alpha \to 0$. Also, the worst-case delay is
\begin{equation}
\small
    \inf_{\tau \in \mathcal{C}_{\alpha}^{\mc{P}_0}} \sup_{(P_0, P_1)\in \mathcal{P}_0 \times \mathcal{P}_1} \WADD{P_0,P_1}{\tau} = \frac{|\ln\alpha|}{\KL{\td{P}_1}{\td{P}_0}}(1+o(1))
\end{equation}
\end{theorem}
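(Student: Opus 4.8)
The plan is to decompose the argument into three parts: (i) verifying the weak stochastic boundedness condition (\ref{wsb:kl}), (ii) showing that $\tau(\Lambda^{\td P_0,\td P_1},b_\alpha)$ achieves the claimed delay subject to the false-alarm constraint, and (iii) proving a matching converse. For part (i), I note that condition (\ref{wsb:exp}) is already assumed, so only (\ref{wsb:kl}) must be derived. Since $(\td P_0,\td P_1)$ minimizes $\KL{P_1}{P_0}$ jointly, in particular $\td P_1$ minimizes $\KL{P_1}{\td P_0}$ over $P_1\in\mc P_1$. Exploiting convexity of $\mc P_1$, for any $P_1\in\mc P_1$ I would introduce the segment $P_1^\lambda:=(1-\lambda)\td P_1+\lambda P_1\in\mc P_1$ and compute the one-sided derivative of $\lambda\mapsto\KL{P_1^\lambda}{\td P_0}$ at $\lambda=0$. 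A direct differentiation gives $\frac{d}{d\lambda}\KL{P_1^\lambda}{\td P_0}\big|_{\lambda=0}=\int(p_1-\td p_1)\ln\frac{\td p_1}{\td p_0}\,dx$, where the term $\int p_1^\lambda\,\frac{d}{d\lambda}\ln p_1^\lambda$ reduces to $\int(p_1-\td p_1)\,dx=0$ because densities integrate to one. First-order optimality over the convex set forces this derivative to be nonnegative, i.e. $\int p_1\ln(\td p_1/\td p_0)\,dx\ge\KL{\td P_1}{\td P_0}$; adding and subtracting $\ln p_1$ inside the integral rewrites this as $\KL{\td P_1}{\td P_0}\le\KL{P_1}{\td P_0}-\KL{P_1}{\td P_1}$, which is exactly (\ref{wsb:kl}). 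Hence $(\mc P_0,\mc P_1)$ is WS bounded by $(\td P_0,\td P_1)$.

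Next I would prove achievability. For the false-alarm constraint, I would invoke (\ref{wsb:exp}): since $\E{}{P_0}{L^{\td P_0,\td P_1}(X)}\le 1$ for every $P_0\in\mc P_0$, the product $\prod_i L^{\td P_0,\td P_1}(X_i)$ is a nonnegative supermartingale under each such $P_0$ (and a martingale at $\td P_0$, where equality holds). This supermartingale property strengthens the Doob/optional-stopping bound underlying Lorden's estimate, yielding $\E{\infty}{P_0,\td P_1}{\tau}\ge e^{b_\alpha}=\alpha^{-1}$, equivalently $\FAR{P_0}{\tau}\le\alpha$, uniformly over $\mc P_0$; thus $\tau(\Lambda^{\td P_0,\td P_1},b_\alpha)\in\mc C_\alpha^{\mc P_0}$. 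For the delay, I would observe that under a true post-change law $P_1$ the increments $\ln L^{\td P_0,\td P_1}(X)$ have drift $\E{}{P_1}{\ln(\td p_1/\td p_0)}$, so the standard (mismatched) CUSUM delay analysis gives $\WADD{P_0,P_1}{\tau}=b_\alpha/\E{}{P_1}{\ln(\td p_1/\td p_0)}\,(1+o(1))$. The variational inequality from part (i) shows this drift is bounded below by $\KL{\td P_1}{\td P_0}>0$ with equality at $P_1=\td P_1$, so the worst case over $(P_0,P_1)$ is attained at the LFD pair, giving the upper bound $|\ln\alpha|/\KL{\td P_1}{\td P_0}\,(1+o(1))$.

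Finally, for the converse I would apply Lai's universal asymptotic lower bound at the single pair $(\td P_0,\td P_1)$. Feasibility requires $\FAR{\td P_0}{\tau}\le\alpha$ for the particular law $\td P_0\in\mc P_0$, so any $\tau\in\mc C_\alpha^{\mc P_0}$ also lies in $\mc C_\alpha^{\{\td P_0\}}$; evaluating the supremum at that pair gives $\sup_{(P_0,P_1)}\WADD{P_0,P_1}{\tau}\ge\WADD{\td P_0,\td P_1}{\tau}\ge |\ln\alpha|/\KL{\td P_1}{\td P_0}\,(1+o(1))$. Matching this with the achievability bound completes the proof. I expect the main obstacle to be the delay step: one must justify both that the CUSUM delay asymptotics hold uniformly for the mismatched increments and that the worst case over the infinite-dimensional class $\mc P_1$ is attained at $\td P_1$ rather than at some boundary law. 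This is precisely where convexity of $\mc P_1$ and the variational characterization of $\td P_1$ as the minimum-drift distribution do the essential work, linking the saddle structure of the KL problem to the operational delay.
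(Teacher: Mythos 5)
The paper does not prove this theorem: it is quoted verbatim from the cited reference (Molloy and Ford), so there is no in-paper argument to compare yours against. That said, your reconstruction is essentially the standard proof of this result and I see no substantive error. The variational step is correct: joint minimality of $(\td{P}_0,\td{P}_1)$ implies $\td{P}_1$ minimizes $\KL{P_1}{\td{P}_0}$ over the convex set $\mc{P}_1$, the directional derivative along $(1-\lambda)\td{P}_1+\lambda P_1$ computes to $\int (p_1-\td{p}_1)\ln(\td{p}_1/\td{p}_0)\,dx$ (the $\int p_1^\lambda \frac{d}{d\lambda}\ln p_1^\lambda$ term indeed vanishes), and nonnegativity of that derivative rearranges exactly into (\ref{wsb:kl}). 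The achievability and converse halves are the standard Lorden/Lai machinery: the supermartingale property from (\ref{wsb:exp}) gives the uniform FAR bound over $\mc{P}_0$, the drift lower bound $\E{}{P_1}{\ln L^{\td{P}_0,\td{P}_1}} \geq \KL{\td{P}_1}{\td{P}_0}$ (which is just (\ref{wsb:kl}) restated) controls the delay over all of $\mc{P}_1$, and evaluating the lower bound at the single pair $(\td{P}_0,\td{P}_1)$ closes the gap. The two points that would need more care in a fully rigorous write-up are ones you already flag or that are standard technicalities: (a) uniformity over $P_1 \in \mc{P}_1$ of the $o(1)$ in the CUSUM delay asymptotics (the constant/overshoot terms depend on $P_1$, though bounded increments or a bounded-support assumption handles this), and (b) regularity needed to differentiate under the integral and to guarantee the one-sided derivative is finite when supports of $p_1$ and $\td{p}_1$ differ. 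Neither is a conceptual gap.
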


In this paper, we consider the problem of quickest change detection for observations with bounded support. The goal is to construct a test to detect when the mean of the observations exceeds some pre-specified threshold, i.e., $P_1 \in \mathcal{P}_1$ such that
\begin{equation}
\label{f1constraint}
    \mathcal{P}_1 := \{ P: \E{}{P}{X} \geq \eta > \mu_0\}
\end{equation}
In this expression, $X$ denotes a generic observation in the sequence, $\eta$ is a pre-designed threshold on the mean, and $\mu_0 := \E{}{P_0}{X}$. Define 
\begin{equation}
    \Delta := \frac{\eta-\mu_0}{2}
\end{equation}
Also, for $i=0,1$, let the density (w.r.t. the Lebesgue measure) of $P_i$ be $p_i$. Define
\begin{equation}
    \kappa_0 (\lambda) = \ln \E{}{P_0}{e^{\lambda X}}
\end{equation}
to be the cumulant-generating functiong (cgf) under measure $P_0$. We present our main results below.

\section{Main Results}
\label{sec:main-results}
\subsection{Known Pre-change Distribution}

Throughout we will assume that  $\mc{P}_1$ is as defined in (\ref{f1constraint}). In this subsection, we study the case where $\mc{P}_0 = \{ P_0 \}$.

\begin{theorem}
For ${\cal P}_0 = P_0$, and ${\cal P}_1$ given in (\ref{f1constraint}), define
\begin{equation}
\label{tilteddistr}
    p_1^*(x) = p_0(x) e^{\lambda^* x - \kappa_0 (\lambda^*)}
\end{equation}
where $\kappa_0 (\lambda)$ is the cgf under $P_0$ and $\lambda^*$ satisfies
\begin{equation}
\label{lambdastar}
    \kappa_0'(\lambda^*) := \frac{\E{}{P_0}{X e^{\lambda^* X}}}{\E{}{P_0}{e^{\lambda^* X}}} = \eta
\end{equation}
Then, the statistic
\begin{equation}
\label{opt_stat}
    \Lambda^{P_0,P_1^*}(t) = \max_{1\leq k \leq t} \sum_{i=k}^t \left(\lambda^* X_i - \kappa_0(\lambda^*) \right)
\end{equation}
and the stopping rule $\tau(\Lambda^{P_0,P_1^*},b_\alpha)$ with threshold $b_\alpha=|\ln{\alpha}|$ solves the minimax robust problem in (\ref{mainprob}) asymptotically as $\alpha \to 0$, and
\begin{equation}
\small
    \inf_{\tau \in \mathcal{C}_{\alpha}^{\mc{P}_0}} \sup_{(P_0, P_1)\in \mathcal{P}_0 \times \mathcal{P}_1} \WADD{P_0,P_1}{\tau} = \frac{|\ln{\alpha}|}{\lambda^* \eta - \kappa_0(\lambda^*)} (1+o(1))
\end{equation}
\end{theorem}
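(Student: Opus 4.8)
The plan is to recognize this as a direct application of Theorem~\ref{wsb} with $\mc{P}_0 = \{P_0\}$ and $\mc{P}_1$ as in (\ref{f1constraint}), and to identify the least-favorable pair as $(\td{P}_0,\td{P}_1) = (P_0,P_1^*)$. To invoke that theorem I must verify its three hypotheses: that $(P_0,P_1^*)$ solves the KL-minimization (\ref{wsb:minkl}), that $\mc{P}_1$ is convex, and that the exponential condition (\ref{wsb:exp}) holds for every $P_0 \in \mc{P}_0$.

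The convexity of $\mc{P}_1$ and condition (\ref{wsb:exp}) are immediate. Since the constraint $\E{}{P}{X} \ge \eta$ is linear in $P$, any convex combination of two members of $\mc{P}_1$ again has mean at least $\eta$, so $\mc{P}_1$ is convex. Because $\mc{P}_0$ is the singleton $\{P_0\}$ and $L^{P_0,P_1^*}(x) = p_1^*(x)/p_0(x) = e^{\lambda^* x - \kappa_0(\lambda^*)}$ is a genuine likelihood ratio, we get $\E{}{P_0}{L^{P_0,P_1^*}(X)} = e^{-\kappa_0(\lambda^*)}\E{}{P_0}{e^{\lambda^* X}} = 1$, so (\ref{wsb:exp}) holds with equality for the only $P_0$ that needs checking. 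The same computation confirms that $p_1^*$ is a valid density and that the CUSUM increment $\ln L^{P_0,P_1^*}(X_i) = \lambda^* X_i - \kappa_0(\lambda^*)$ matches the statistic in (\ref{opt_stat}).

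The substantive step is showing that $P_1^*$ is the information projection of $P_0$ onto $\mc{P}_1$, i.e.\ that it solves $\inf_{P_1 \in \mc{P}_1} \KL{P_1}{P_0}$. First I would note $\lambda^* > 0$: the cgf $\kappa_0$ is convex with $\kappa_0'(0) = \mu_0$, and since $\kappa_0'$ is increasing with $\kappa_0'(\lambda^*) = \eta > \mu_0$, it follows that $\lambda^* > 0$; this also gives $\E{}{P_1^*}{X} = \kappa_0'(\lambda^*) = \eta$, so $P_1^* \in \mc{P}_1$ and sits on the boundary. Then for any $P_1 \in \mc{P}_1$ I would use the decomposition
\begin{equation*}
\KL{P_1}{P_0} = \KL{P_1}{P_1^*} + \E{}{P_1}{\ln \tfrac{p_1^*}{p_0}} = \KL{P_1}{P_1^*} + \lambda^* \E{}{P_1}{X} - \kappa_0(\lambda^*).
\end{equation*}
Since $\KL{P_1}{P_1^*} \ge 0$, $\lambda^* > 0$, and $\E{}{P_1}{X} \ge \eta$, the right-hand side is at least $\lambda^* \eta - \kappa_0(\lambda^*)$, with equality when $P_1 = P_1^*$. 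Hence $P_1^*$ attains the infimum and $\KL{P_1^*}{P_0} = \lambda^* \eta - \kappa_0(\lambda^*)$.

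With the three hypotheses verified, Theorem~\ref{wsb} gives that $(\mc{P}_0,\mc{P}_1)$ is WS bounded by $(P_0,P_1^*)$ and that $\tau(\Lambda^{P_0,P_1^*}, b_\alpha)$ solves (\ref{mainprob}) asymptotically, with worst-case delay $|\ln\alpha|/\KL{P_1^*}{P_0}\,(1+o(1))$; substituting the value of $\KL{P_1^*}{P_0}$ computed above yields the stated expression. The main obstacle is the KL-minimization: one must ensure the inequality $\E{}{P_1}{X} \ge \eta$ pushes the objective in the correct direction, which is precisely where $\lambda^* > 0$ is essential (for a decrease in mean both the tilt and the inequality would reverse, preserving the argument). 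One should also confirm that $\lambda^*$ solving (\ref{lambdastar}) exists, which holds whenever $\mu_0 < \eta < \esssup X$, since $\kappa_0'$ then maps $\mathbb{R}$ continuously and monotonically onto the interior of the convex hull of the support.
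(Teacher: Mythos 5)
Your proposal is correct, and it reaches the same conclusion as the paper while handling the one substantive step --- showing that $P_1^*$ minimizes $\KL{P_1}{P_0}$ over $\mc{P}_1$ --- by a genuinely different route. The paper \emph{derives} the exponential-tilt form constructively: it sets up a Lagrangian for the constrained minimization, takes a Gateaux derivative in the direction of an arbitrary perturbation $z$, invokes the generalized Kuhn--Tucker theorem to get $p_1(x) = p_0(x)e^{\lambda x + \mu}$ as a necessary condition, and then appeals to convexity of the Lagrangian for global optimality. You instead take $p_1^*$ as given and \emph{verify} its optimality directly via the Pythagorean-style decomposition $\KL{P_1}{P_0} = \KL{P_1}{P_1^*} + \lambda^*\E{}{P_1}{X} - \kappa_0(\lambda^*) \ge \lambda^*\eta - \kappa_0(\lambda^*)$. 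The paper's route is more informative if one does not already know the answer, since it produces the tilted family rather than assuming it; your route is shorter and arguably tighter, since it avoids the infinite-dimensional variational machinery, does not quietly ignore the nonnegativity constraint on $p_1$ (which the stationarity argument does), makes explicit exactly where $\lambda^* > 0$ and the direction of the constraint $\E{}{P_1}{X}\ge\eta$ are used, and adds the existence condition $\mu_0 < \eta < \esssup X$ for $\lambda^*$, which the paper leaves implicit. The remaining steps --- convexity of $\mc{P}_1$, condition (\ref{wsb:exp}) via the singleton $\mc{P}_0$, the value $\KL{P_1^*}{P_0} = \lambda^*\eta - \kappa_0(\lambda^*)$, and the appeal to Theorem \ref{wsb} --- coincide with the paper's.
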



\begin{proof}
We follow the procedure outlined in \cite[Sec.~6.4.1]{levy-detection}. We want to minimize $\KL{P_1}{P_0} = \E{}{P_1}{\ln (p_1(x)/p_0(x))}$ subject to $\E{}{P_1}{X} \geq \eta$. We consider the Lagrangian
\begin{equation}
\begin{split}
    L(p_1,\lambda,\mu) &= \E{}{P_1}{\ln (p_1(x)/p_0(x))} \\
    &\qquad + \lambda (\eta - \E{}{P_1}{X}) + \mu \left(1 - \int_{[0,1]} p_1(x) d x \right)\\
    &= \int_{[0,1]} \left(\ln \frac{p_1(x)}{p_0(x)} - \lambda x - \mu \right) p_1(x) d x \\
   &\qqquad + \lambda \eta + \mu 
\end{split}
\end{equation}
where the Lagrange multiplier $\lambda \geq 0$ corresponds to the constraint that the post-change mean is greater than $\eta$, and $\mu$ corresponds to the constraint that $p_1(x)$ is a probability measure. 
For an arbitrary direction $z$, since $p_1$ is continuous by assumption, we take the Gateaux derivative with respect to $p_1$:
\begin{equation}
    \begin{split}
        \nabla_{p_1,z} L(p_1,\lambda,\mu) &:= \lim_{h \to 0} \frac{L(p_1+h z,\lambda,\mu)-L(p_1,\lambda,\mu)}{h}\\
        &= \int_{[0,1]} \left(\ln \frac{p_1(x)}{p_0(x)} - \lambda x - \mu'\right) z d x
    \end{split}
\end{equation}
where $\mu' = \mu - 1$, and since $z$ is arbitrary, we arrive at
\begin{equation}
    \ln \frac{p_1(x)}{p_0(x)} - \lambda x - \mu' = 0
\end{equation}
By the Generalized Kuhn–Tucker Theorem \cite{luenburger-opt-by-vec-space}, if $p_0 (x)$ is bounded, $p_1 (x) = p_0(x) e^{\lambda x+\mu}$ is a necessary condition for optimality. Furthermore, since $L(p_1,\lambda,\mu)$ is convex in $p_1$, this is also a global optimum. To satisfy the constraints,
\begin{equation}
    \mu' = -\ln \int_{[0,1]} p_0 e^{\lambda x} d x = - \kappa_0 (\lambda)
\end{equation}
and $\lambda^*$ satisfies
\begin{equation}
    \eta = \E{}{P_1}{X} = \frac{\E{}{P_0}{X e^{\lambda^* X}}}{\E{}{P_0}{e^{\lambda^* X}}} = \kappa_0'(\lambda^*)
\end{equation}
Thus, $P_1^*$ in (\ref{tilteddistr}) minimizes the KL divergence when $P_0$ is known. By Theorem \ref{wsb}, since $(P_0,P_1^*)$ minimizes the KL divergence, $\mc{P}_1$ is convex in that the expectation operator is linear, and $\mc{P}_0$ is a singleton, the uncertainty classes $\mc{P}_0 \times \mc{P}_1$ are WS bounded by $(P_0,P_1^*)$. Therefore, an asymptotically optimal test is $\Lambda^{P_0,P_1^*}$ as in (\ref{opt_stat}).

Furthermore, the minimum KL-divergence is
\begin{equation}
\begin{split}
    \KL{P_1^*}{P_0} &= \int_{[0,1]} (\lambda^* x - \kappa_0(\lambda^*)) p_1^* (x) d x\\
    &= \lambda^* \eta - \kappa_0(\lambda^*)
\end{split}
\end{equation}
Hence, the worst-case delay satisfies
\begin{equation}
\small
\begin{split}
    \inf_{\tau \in \mathcal{C}^{\{P_0\}}_{\alpha}} \sup_{P_1\in \mathcal{P}_1} \WADD{P_0,P_1}{\tau} &= \frac{|\ln{\alpha}|}{\KL{P_1^*}{P_0}}(1+o(1))\\
    &= \frac{|\ln{\alpha}|}{\lambda^* \eta - \kappa_0(\lambda^*)}(1+o(1))
\end{split}
\end{equation}
as $\alpha \to 0$. \qedhere
\end{proof}

Since $p_0$ is a density on $[0,1]$, $p^*$ is also a density on $[0,1]$. Indeed, $p^*$ is the exponentially-tilted distribution (or the Esscher transform) of $p_0$. 

\subsection{Approximation for Small $\Delta$}

Even though we have an expression for the test statistic when $P_0$ is known, as given in (\ref{opt_stat}), the exact solution of $\lambda^*$ is not available in closed-form. Fortunately, if the mean-change gap $\Delta$ is small, we obtain a low-complexity test in terms of only the pre-change mean and variance that closely approximates the performance of the asymptotically minimax optimal test in the previous section.

As the gap is asymptotically small, the worst-case post-change mean $\eta \to \mu_0$. Hence, $\lambda^* \to 0$. From the Taylor expansion on $\kappa_0$ around 0, we obtain
\begin{equation}
\label{taylor}
\begin{split}
    \kappa_0(\lambda^*) &= \kappa_0(0) + \kappa_0'(0) \lambda^* + \frac{\kappa_0''(0)}{2} (\lambda^*)^2 + o((\lambda^*)^2)\\
    &= \mu_0 \lambda^* + \frac{\sigma_0^2}{2} (\lambda^*)^2 + o((\lambda^*)^2)
\end{split}
\end{equation}
In this same regime, by continuity of $\kappa_0'(\cdot)$,
\begin{equation}
\begin{split}
    \lambda^* &= \frac{\kappa_0'(\lambda^*) - \kappa_0'(0)}{\kappa_0''(0)} + o(\Delta)\\
    &= \frac{\eta - \mu_0}{\sigma_0^2} + o(\Delta)\\
    &= \frac{2 \Delta}{\sigma_0^2} + o(\Delta)
\end{split}
\end{equation}
where we have used $\kappa_0'(\lambda^*) = \eta$. Hence, the approximate test statistic is
\begin{equation}
\begin{split}
    \lambda^* X_t - \kappa_0(\lambda^*) &= \lambda^* X_t - (\mu_0 \lambda^* + \frac{\sigma_0^2}{2} (\lambda^*)^2) + o((\lambda^*)^2)\\
    &= \frac{2 \Delta}{\sigma_0^2} (X_t - \mu_0) - \frac{\sigma_0^2}{2} \left(\frac{2 \Delta}{\sigma_0^2}\right)^2 + o(\Delta^2)\\
    &= \frac{2 \Delta}{\sigma_0^2} \left(X_t - \frac{\mu_0 + \eta}{2}\right) + o(\Delta^2)
\end{split}
\end{equation}
and the corresponding minimum KL-divergence is approximated as:
\begin{equation}
    \KL{P_1^*}{P_0} = \frac{2\Delta^2}{\sigma_0^2} + o(\Delta^2)
\end{equation}

Let $\sigma_0^2$ be the variance of $X$ under $P_0$. Since
\begin{equation}
\small
    \frac{2 \Delta}{\sigma_0^2} \left(X_t - \frac{\mu_0 + \eta}{2}\right) > b_\alpha \iff X_t - \frac{\mu_0 + \eta}{2} > \td{b}_\alpha
\end{equation}
where
\begin{equation}
\label{balpha}
    \td{b}_\alpha := \frac{|\ln{\alpha}| \sigma_0^2}{2\Delta} = \frac{|\ln{\alpha}| \sigma_0^2}{\eta - \mu_0},
\end{equation}
the stopping rule $\tau(\Lambda^{P_0,P_1^*}, b_\alpha)$ can be approximated by the stopping rule $\tau(\td{\Lambda}^{\mu_0,\eta}, \td{b}_\alpha)$, where 
\begin{equation}
\label{stat}
    \begin{split}
        \td{\Lambda}^{\mu_0,\eta} (t) &= \max_{1\leq k \leq t} \sum_{i=k}^t \left(X_i - \frac{\mu_0+\eta}{2}\right)\\
        &= \left(\td{\Lambda}^{\mu_0,\eta} (t-1)+ \left(X_t - \frac{\mu_0+\eta}{2}\right)\right)^+
    \end{split}
\end{equation}
with $\td{\Lambda}^{\mu_0,\eta}(0)=0$. We call $\tau(\td{\Lambda}^{\mu_0,\eta}, \td{b}_\alpha)$ the Mean-Change Test (MCT), and  $\td{\Lambda}^{\mu_0,\eta}$ the MCT statistic.

Noe that the worst-case delay satisfies
\begin{equation}
\label{est_opt_delay}
\begin{split}
    \inf_{\tau \in \mathcal{C}^{\{P_0\}}_{\alpha}} \sup_{P_1\in \mathcal{P}_1} \WADD{P_0,P_1}{\tau} &= \frac{|\ln{\alpha}|}{\KL{P_1^*}{P_0}}(1+o(1))\\
    &\approx \frac{|\ln{\alpha}| \sigma_0^2}{2\Delta^2}(1+o(1))
\end{split}
\end{equation}
where the approximation becomes more accurate as $\Delta \to 0$.

Therefore, if the gap is small, it is sufficient to know only the mean and variance to construct a good approximation to the asymptotically minimax robust test. Furthermore, only the mean of the pre-change distribution is needed to construct the MCT statistic. From the simulation results in Section IV, the performance of the MCT is very close to that of the asymptotically minimax robust test. Since the mean and variance of a distribution are much easier and more accurately estimated than the entire density, this test can be useful and accurate when only a moderate number of observations in the pre-change regime is available. 

\subsection{Performance Analysis of MCT for moderate $\Delta$}
We now study the asymptotic performance of the MCT  for the case where  $\Delta$ is moderate, as $\alpha \to 0$.



\begin{lemma}
\label{nonasym:fa}
Fix $P_0 \in \mc{P}_0$. For simplicity, denote $\tau := \tau(\td{\Lambda}^{\mu_0,\eta}, b)$, $Z_i := X_i - (\mu_0+\eta)/2$, and $S_t = \sum_{i=1}^t Z_i$. For any threshold $b > 1$,
\begin{equation}
\label{nng:fa}
\small
    \begin{split}
        P_0\left\{S_\tau \geq b\right\} &\leq 2 R_0 \sqrt{\frac{b^2}{\Delta^2}} K_1 \left(\frac{R_0^2 b \Delta }{\sigma_0^2} \right) \exp{\left(- \frac{R_0^2 \Delta}{\sigma_0^2} b \right)}\\
        &\sim \sqrt{\frac{2 \pi \sigma_0^2 b}{\Delta^3}} \exp{\left(- \frac{2 R_0^2 \Delta}{\sigma_0^2} b \right)}, \text{ as } b \to \infty
    \end{split}
\end{equation}
where
\begin{equation}
\label{rdef}
    R_0=\sigma_0^2 / \left(\sigma_0^2+\Delta \cdot \max\{\mu_0, 1-\mu_0\} / 3\right)
\end{equation}
and $K_\beta (z)$ is the modified Bessel function of the second kind with order $\beta$.
\end{lemma}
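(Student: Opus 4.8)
The plan is to dominate the false‑alarm event by a union over times and to reduce each term to a sub‑gamma tail bound that, once summed and approximated by an integral, evaluates to a modified Bessel function. I would first record the distributional facts about the increments under $P_0$. Since the $X_i \in [0,1]$ are i.i.d.\ with mean $\mu_0$ and variance $\sigma_0^2$, the shifted increments $Z_i = X_i - (\mu_0+\eta)/2$ are i.i.d.\ with $\mathbb{E}[Z_i] = \mu_0 - (\mu_0+\eta)/2 = -\Delta < 0$ and $\mathrm{Var}(Z_i) = \sigma_0^2$; moreover the centered increment is $Z_i - \mathbb{E}[Z_i] = X_i - \mu_0 \in [-\mu_0,\,1-\mu_0]$, so $|Z_i - \mathbb{E}[Z_i]| \le c := \max\{\mu_0, 1-\mu_0\}$, which is exactly the range constant appearing in (\ref{rdef}). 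Because $\{S_\tau \ge b\} \subseteq \bigcup_{t\ge 1}\{S_t \ge b\}$ (if the walk is at least $b$ at the stopping time then it is at least $b$ at some time), a union bound gives $P_0\{S_\tau \ge b\} \le \sum_{t\ge 1} P_0\{S_t \ge b\}$; this is lossy, but only an upper bound is required.

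Next I would control each term $P_0\{S_t \ge b\}$ with a Bernstein/Bennett (sub‑gamma) inequality for bounded i.i.d.\ variables, based on the moment bound $\ln \mathbb{E}[e^{\lambda(X_i-\mu_0)}] \le \tfrac{\lambda^2\sigma_0^2/2}{1-\lambda c/3}$ for $0 \le \lambda < 3/c$. Writing $P_0\{S_t \ge b\} = P_0\{S_t - \mathbb{E}[S_t] \ge b + \Delta t\}$ and optimizing the Chernoff exponent, I would bound the sub‑gamma rate by the quadratic form, yielding $P_0\{S_t \ge b\} \le \exp\!\left(-\tfrac{R_0^2 (b+\Delta t)^2}{2\sigma_0^2 t}\right)$, where the correction factor $R_0 = \sigma_0^2/(\sigma_0^2 + \Delta c/3) \in (0,1]$ is precisely what absorbs the range term $c/3$ in the Bernstein denominator. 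This is the step that produces the constant $R_0$ of (\ref{rdef}).

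Then I would pass from the sum to an integral, $\sum_{t\ge1} \exp(-R_0^2(b+\Delta t)^2/(2\sigma_0^2 t)) \lesssim \int_0^\infty \exp(-R_0^2(b+\Delta s)^2/(2\sigma_0^2 s))\,ds$, justified by the unimodality of the summand (which peaks near $s^\ast = b/\Delta$). Expanding $(b+\Delta s)^2/s = b^2/s + 2b\Delta + \Delta^2 s$ pulls out the factor $e^{-R_0^2 b\Delta/\sigma_0^2}$, and the remaining integral is evaluated by the identity $\int_0^\infty e^{-\beta/s-\gamma s}\,ds = 2\sqrt{\beta/\gamma}\,K_1(2\sqrt{\beta\gamma})$ with $\beta = R_0^2 b^2/(2\sigma_0^2)$ and $\gamma = R_0^2\Delta^2/(2\sigma_0^2)$, so that $\sqrt{\beta/\gamma} = b/\Delta$ and $2\sqrt{\beta\gamma} = R_0^2 b\Delta/\sigma_0^2$. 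This reproduces the closed form in (\ref{nng:fa}); the asymptotic line then follows from $K_1(z) \sim \sqrt{\pi/(2z)}\,e^{-z}$, which multiplies the explicit exponential to give the combined rate $2R_0^2\Delta b/\sigma_0^2$ and the prefactor $\sqrt{2\pi\sigma_0^2 b/\Delta^3}$.

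The hard part will be the per‑time bound: the clean quadratic exponent $R_0^2(b+\Delta t)^2/(2\sigma_0^2 t)$ is a faithful surrogate for the true sub‑gamma rate only in the dominant range $t \approx b/\Delta$ (where, by AM--GM, it attains exactly the rate $2R_0^2 b\Delta/\sigma_0^2$), whereas for very small $t$ the term $cu/3$ dominates the Bernstein denominator and the true tail behaves like $e^{-3b/(2c)}$ rather than the much smaller Gaussian value. I would therefore need to verify that these small‑$t$ contributions are negligible relative to the $t \approx b/\Delta$ terms, which holds because $3/(2c) \gg 2R_0^2\Delta/\sigma_0^2$ in the moderate‑$\Delta$ regime, so that replacing the sum by the Gaussian integral stays a valid upper bound. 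Sharpening the prefactor from the $2b/\Delta$ that the clean integral gives to the stated $2R_0\,b/\Delta$, and rigorously justifying the sum‑to‑integral comparison, are the delicate bookkeeping points; the Bessel evaluation and the large‑$b$ asymptotics are then routine.
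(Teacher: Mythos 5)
Your overall architecture matches the paper's: bound $\{S_\tau\ge b\}$ by the union $\bigcup_t\{S_t\ge b\}$, apply Bernstein's inequality to each $P_0\{S_t\ge b\}$ (with range constant $c/3=\max\{\mu_0,1-\mu_0\}/3$), compare the sum to an integral, evaluate via $\int_0^\infty e^{-\beta/s-\gamma s}\,ds=2\sqrt{\beta/\gamma}\,K_1(2\sqrt{\beta\gamma})$, and finish with $K_1(z)\sim\sqrt{\pi/(2z)}e^{-z}$. The problem is your middle step. The surrogate bound
$P_0\{S_t\ge b\}\le\exp\bigl(-R_0^2(b+\Delta t)^2/(2\sigma_0^2 t)\bigr)$
is \emph{not} a consequence of Bernstein's inequality for all $t$: writing $M=c/3$, Bernstein's denominator is $2\bigl(t\sigma_0^2+M(b+t\Delta)\bigr)$ while yours is $2\sigma_0^2 t/R_0^2=2t(\sigma_0^2+M\Delta)^2/\sigma_0^2$, and the latter dominates the former only when $t\ge R_0 b/\Delta$. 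For $t<R_0b/\Delta$ your claimed exponent is strictly more negative than Bernstein's, so the per-time bound is unproven (and in general false). You recognize this, but your proposed repair --- arguing that small-$t$ contributions are ``negligible'' --- would at best recover the asymptotic second line of \eqref{nng:fa}; it cannot deliver the non-asymptotic inequality in the first line, and it also leaves you with the prefactor $2b/\Delta$ rather than the stated (smaller) $2R_0 b/\Delta$.

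The paper sidesteps this entirely: it keeps the exact Bernstein exponent inside the integral and performs the affine substitution $y=x(\sigma_0^2+M\Delta)+Mb$, under which
\begin{equation*}
\frac{(b+x\Delta)^2}{2\bigl(x\sigma_0^2+M(b+x\Delta)\bigr)}=\frac{(C+a\Delta y)^2}{2y},\qquad a:=(\sigma_0^2+M\Delta)^{-1},\quad C:=R_0 b,
\end{equation*}
so the only lossy step is extending the lower limit of integration from $Mb$ down to $0$. Expanding $(C+a\Delta y)^2/(2y)$ and applying the Bessel identity then gives exactly $\frac{2C}{\Delta}e^{-a\Delta C}K_1(a\Delta C)=2R_0\frac{b}{\Delta}K_1\bigl(R_0^2b\Delta/\sigma_0^2\bigr)e^{-R_0^2b\Delta/\sigma_0^2}$, with no case analysis in $t$ and no asymptotic negligibility argument. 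You should replace your uniform surrogate with this exact change of variables; the rest of your plan (union bound, sum-to-integral comparison, Bessel evaluation, $K_1$ asymptotics) then goes through as written.
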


\begin{proof}
Let $M = \max\{\mu_0/3, (1-\mu_0)/3\}$. Note that $\E{}{P_0}{Z_i} = (\mu_0-\eta)/2 = -\Delta$. Thus, we have
\begin{align*}
    P_0\left\{S_\tau \geq b\right\} &= P_0\left\{\sum_{i=1}^\tau Z_i \geq b\right\} \\
    &= \sum_{t=1}^\infty P_0\left\{\sum_{i=1}^t Z_i \geq b, t = \tau\right\} \\
    &\leq \sum_{t=1}^\infty P_0\left\{\sum_{i=1}^t Z_i \geq b\right\} \\
    &= \sum_{t=1}^\infty P_0\left\{\sum_{i=1}^t (Z_i+\Delta) \geq b+t\Delta\right\} 
\end{align*}    
\begin{align}
    &\stackrel{(a)}{\leq} \sum_{t=1}^\infty \exp \left(-\frac{(b+t\Delta)^2}{2(t \sigma_0^2 + M (b+t\Delta))}\right) \nonumber\\
    &\stackrel{(b)}{\leq} \int_0^\infty \exp \left(-\frac{(b+x \Delta)^2}{2(x \sigma_0^2 + M (b+x \Delta))}\right) d x \nonumber\\
    &= a \int_0^\infty \exp \left(-\frac{(a \Delta y + C)^2}{2y}\right) d y \nonumber\\
    &= a e^{-a \Delta C} \int_0^\infty e^{-((a^2 \Delta^2 / 2)y + (C^2/2) y^{-1})} d y \nonumber\\
    &= \frac{2 C}{\Delta} e^{-a \Delta C} K_1(a \Delta C)
\end{align}
where $a := (\sigma_0^2+M\Delta)^{-1}$ and $C := \sigma_0^2 b / (\sigma_0^2 + M \Delta)$. In the series of inequalities above, (a) is by Bernstein's inequality \cite[p. 9]{all-nonpara-stat}, and (b) is from bounding the sum with an integral. Since $K_1(z) \sim \sqrt{\frac{\pi}{2 z}} e^{-z}$ as $|z| \to \infty$,
the asymptotic result follows. \qedhere
\end{proof}

\begin{theorem}
Let
\begin{equation}
\label{thr_set}
    \sqrt{\frac{2 \pi \sigma_0^2 \td{b}'_\alpha}{\Delta^3}} \exp{\left(- \frac{2 R_0^2 \Delta}{\sigma_0^2} \td{b}'_\alpha \right)} = \alpha
\end{equation}
then (\ref{main_fa_constraint}) is satisfied asymptotically as $\alpha \to 0$. Furthermore,
\begin{equation}
\label{thr_set2}
    \td{b}'_\alpha = \frac{\td{b}_\alpha}{R_0^2} (1+o(1))
\end{equation}
where $R_0$ is defined in (\ref{rdef}).
\end{theorem}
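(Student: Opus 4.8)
The plan is to prove the statement in two stages: first, to reduce the false-alarm constraint (\ref{main_fa_constraint}) to the tail bound already established in Lemma \ref{nonasym:fa}; and second, to invert the transcendental equation (\ref{thr_set}) asymptotically to obtain the explicit threshold (\ref{thr_set2}).

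For the first stage, fix $P_0 \in \mc{P}_0$ and write $\tau := \tau(\td{\Lambda}^{\mu_0,\eta}, \td{b}'_\alpha)$. The key reduction is the inequality $\FAR{P_0}{\tau} \le \sum_{t\ge 1} P_0\{S_t \ge \td{b}'_\alpha\}$. I would obtain this from the renewal (excursion) structure of the recursion (\ref{stat}): when no change occurs the increments $Z_i = X_i - (\mu_0+\eta)/2$ are i.i.d.\ with negative mean $-\Delta$, and each time $\td{\Lambda}^{\mu_0,\eta}$ returns to $0$ the process regenerates, so the excursions between successive visits to $0$ are i.i.d. The number of excursions until the one on which the statistic first reaches $\td{b}'_\alpha$ is then geometric with some parameter $q$, and since every excursion has length at least one, Wald's identity gives $\E{\infty}{P_0}{\tau} \ge 1/q$, i.e. $\FAR{P_0}{\tau}\le q$. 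Detection within a single excursion forces the raw walk, started afresh at the excursion's onset, to cross $\td{b}'_\alpha$ before returning to $0$, so $q \le P_0\{\sup_{t\ge 1} S_t \ge \td{b}'_\alpha\} \le \sum_{t\ge 1} P_0\{S_t \ge \td{b}'_\alpha\}$ by the union bound. This last sum is exactly the quantity bounded in the proof of Lemma \ref{nonasym:fa} (its first displayed inequality), hence it is at most $\sqrt{2\pi\sigma_0^2 \td{b}'_\alpha/\Delta^3}\,\exp(-2R_0^2\Delta\,\td{b}'_\alpha/\sigma_0^2)(1+o(1))$ as $\td{b}'_\alpha \to\infty$. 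Choosing $\td{b}'_\alpha$ to satisfy (\ref{thr_set}) then yields $\FAR{P_0}{\tau}\le\alpha(1+o(1))$; since this bound depends on $P_0$ only through $\mu_0$ and $\sigma_0^2$, it is uniform over $\mc{P}_0$, so (\ref{main_fa_constraint}) holds asymptotically.

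For the second stage, I would take logarithms in (\ref{thr_set}): writing $c := 2R_0^2\Delta/\sigma_0^2$, the equation becomes $c\,\td{b}'_\alpha = |\ln\alpha| + \tfrac12\ln(2\pi\sigma_0^2\td{b}'_\alpha/\Delta^3)$. Since the prefactor term grows only logarithmically in $\td{b}'_\alpha$ while $\td{b}'_\alpha = \Theta(|\ln\alpha|)$, that term is $O(\ln|\ln\alpha|) = o(|\ln\alpha|)$, so $\td{b}'_\alpha = |\ln\alpha|/c\,(1+o(1)) = \sigma_0^2|\ln\alpha|/(2R_0^2\Delta)(1+o(1))$. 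Recalling $\td{b}_\alpha = |\ln\alpha|\sigma_0^2/(2\Delta)$ from (\ref{balpha}) gives $\td{b}'_\alpha = \td{b}_\alpha/R_0^2\,(1+o(1))$, which is (\ref{thr_set2}).

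The routine parts are the union bound, the $K_1$ asymptotics (already supplied in Lemma \ref{nonasym:fa}), and the logarithmic inversion in the second stage. The main obstacle is the renewal reduction $\FAR{P_0}{\tau}\le q \le P_0\{\sup_{t\ge1} S_t \ge \td{b}'_\alpha\}$: one must argue rigorously that the return times of (\ref{stat}) to $0$ produce i.i.d.\ excursions with a well-defined per-excursion crossing probability, that the number of excursions before detection is geometric, and that Wald's identity legitimately lower-bounds $\E{\infty}{P_0}{\tau}$. A secondary point requiring care is the uniformity of the bound over $\mc{P}_0$, which holds precisely because the Bernstein estimate underlying Lemma \ref{nonasym:fa} depends on the pre-change law only through its (common) mean and variance.
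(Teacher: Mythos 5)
Your proposal is correct and follows essentially the same route as the paper: both parts rest on Lemma \ref{nonasym:fa} combined with the renewal/Wald lower bound $\E{\infty}{P_0,P_1}{\tau} \geq 1/P_0\{S_\tau \geq \td{b}'_\alpha\}$ for the first claim, and a logarithmic inversion of (\ref{thr_set}) (showing the prefactor contributes only $O(\ln|\ln\alpha|) = o(|\ln\alpha|)$) for the second. Your excursion/geometric-trials argument is just a more explicit spelling-out of the step the paper delegates to Siegmund, and your inversion is the paper's argument with the substitution $D = 2\Delta\td{b}'_\alpha/\sigma_0^2 - |\ln\alpha|$ made implicit.
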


\begin{proof}
As $\alpha \to 0$, $\td{b}'_\alpha \to \infty$.
As a result, for any $P_1 \in \mc{P}_1$, $P_0\left\{S_\tau \geq b\right\} \leq \alpha (1+o(1))$. From \cite[Sec.~2.6]{siegmund_1985}, it can be shown that
\begin{equation}
\begin{split}
    \E{\infty}{P_0,P_1}{\tau} &= \frac{\E{}{P_0}{S_\tau}}{P_0\left\{S_\tau \geq b\right\}}\\
    &\geq \frac{1}{P_0\left\{S_\tau \geq b\right\}}\\
    &\geq \alpha^{-1}(1+o(1))
\end{split}
\end{equation}
since $\E{}{P_0}{S_\tau} \geq 1$. Thus, the false alarm constraint is satisfied asymptotically.

For the second result, it is sufficient to show that $(\td{b}'_\alpha - \td{b}_\alpha)/\td{b}_\alpha = R_0^{-2}-1+o(1)$. Let
\begin{equation}
    D := \frac{2 \Delta}{\sigma_0^2} \td{b}'_\alpha - |\ln \alpha|
\end{equation}
Note that the desired ratio $(\td{b}'_\alpha - \td{b}_\alpha)/\td{b}_\alpha = D/|\ln \alpha|$. Plugging into (\ref{nng:fa}), we have
\begin{equation}
    \sqrt{\frac{\sigma_0^4 \pi}{\Delta^4} (D+|\ln\alpha|)} e^{-R_0^2 (D+|\ln \alpha|)} = \alpha
\end{equation}
\begin{equation}
    -\frac{1}{2}\ln \left(\frac{\sigma_0^4 \pi}{\Delta^4} (D+|\ln\alpha|)\right) + R_0^2 (D+|\ln \alpha|) = |\ln\alpha|
\end{equation}
Now, we hypothesize that $D = D_1 |\ln \alpha| + o(|\ln \alpha|)$. The first term then vanishes because
\begin{equation}
\small
    \begin{split}
        \ln \left(\frac{\sigma_0^4 \pi}{\Delta^4} (D+|\ln\alpha|)\right) &= \ln \left(\frac{\sigma_0^4 \pi}{\Delta^4} ((D_1+1) |\ln \alpha| + o(|\ln \alpha|)\right)\\
        &= O(\ln(|\ln \alpha|))\\
        &= o(|\ln\alpha|)
    \end{split}
\end{equation}
From the second term, we validate our hypothesis that
\begin{equation}
    D = (R_0^{-2}-1) |\ln \alpha| + o(|\ln \alpha|)
\end{equation}
and the second result follows. 
\end{proof}

\begin{remark}
In practice, the threshold can be set to be $\sigma_0^2 / (2 R_0^2 \Delta)$ using equation (\ref{thr_set2}).
\end{remark}

\begin{theorem}
Fix $P_0 \in \mc{P}_0$. The worst-case delay satisfies
\begin{equation}
\label{est_stat}
    \sup_{P_1\in \mathcal{P}_1} \WADD{P_0,P_1}{\tau(\td{\Lambda}^{\mu_0,\eta}, \td{b}'_\alpha)} \leq \frac{|\ln{\alpha}| \sigma_0^2}{2\Delta^2 R_0^2}(1+o(1)) 
\end{equation}
as $\alpha \to 0$.
\end{theorem}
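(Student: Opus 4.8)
The plan is to exploit the fact that the MCT statistic $\td{\Lambda}^{\mu_0,\eta}$ in (\ref{stat}) is an ordinary CUSUM recursion driven by the i.i.d.\ increments $Z_i = X_i - (\mu_0+\eta)/2$, so that the delay analysis reduces to a one-sided random-walk first-passage problem. First I would reduce the worst-case delay to the expected delay of a CUSUM freshly reset to $0$ at the change-point. Running the recursion from an arbitrary initial value $s \geq 0$ and coupling two copies driven by the same $Z_i$'s, the map $s \mapsto (s+Z)^+$ is non-decreasing, so by induction the copy started from the larger value dominates the other at every time step; hence its first-passage time to $\td{b}'_\alpha$ is stochastically smaller and the conditional delay larger when the starting value is $0$. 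Since the statistic depends on the past only through $\td{\Lambda}^{\mu_0,\eta}(\nu-1) \geq 0$, this shows, as in Lorden's analysis, that $\sup_\nu \esssup \E{\nu}{P_0,P_1}{(\tau-\nu+1)^+ \mid X_1,\dots,X_{\nu-1}}$ is bounded by the fresh-start delay, i.e.\ $\WADD{P_0,P_1}{\tau} \leq \E{1}{P_0,P_1}{\tau}$.

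Next I would bound the fresh-start delay by a plain random walk. Using the representation $\td{\Lambda}^{\mu_0,\eta}(t) = S_t - \min_{0\le k\le t} S_k$ with $S_0 = 0$, we have $\td{\Lambda}^{\mu_0,\eta}(t) \geq S_t$, so the CUSUM crossing time is dominated by $\tau_{\mathrm{RW}} := \inf\{t : S_t \geq \td{b}'_\alpha\}$, giving $\E{1}{P_0,P_1}{\tau} \leq \E{}{P_1}{\tau_{\mathrm{RW}}}$. The crucial quantitative input is that for every $P_1 \in \mc{P}_1$ the post-change drift is bounded below uniformly, $\E{}{P_1}{Z_i} = \E{}{P_1}{X} - (\mu_0+\eta)/2 \geq \eta - (\mu_0+\eta)/2 = \Delta > 0$, while the increments are bounded, $Z_i \le 1$, so the overshoot at $\tau_{\mathrm{RW}}$ is at most $1$ and $\td{b}'_\alpha \leq \E{}{P_1}{S_{\tau_{\mathrm{RW}}}} \leq \td{b}'_\alpha + 1$. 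Since the positive drift and bounded increments make $\E{}{P_1}{\tau_{\mathrm{RW}}}$ finite, Wald's identity applies and yields
\begin{equation}
\E{}{P_1}{\tau_{\mathrm{RW}}} = \frac{\E{}{P_1}{S_{\tau_{\mathrm{RW}}}}}{\E{}{P_1}{Z_i}} \leq \frac{\td{b}'_\alpha + 1}{\Delta},
\end{equation}
where the right-hand side depends on $P_1$ only through the uniform bounds $\E{}{P_1}{Z_i} \ge \Delta$ and overshoot $\le 1$, so it holds uniformly over $\mc{P}_1$.

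Finally I would substitute the threshold. As $\alpha \to 0$ we have $\td{b}'_\alpha \to \infty$, so $(\td{b}'_\alpha + 1)/\Delta = (\td{b}'_\alpha/\Delta)(1+o(1))$. Combining (\ref{thr_set2}) with the definition of $\td{b}_\alpha$ in (\ref{balpha}) gives $\td{b}'_\alpha = \tfrac{|\ln\alpha|\sigma_0^2}{2\Delta R_0^2}(1+o(1))$, whence $\td{b}'_\alpha/\Delta = \tfrac{|\ln\alpha|\sigma_0^2}{2\Delta^2 R_0^2}(1+o(1))$. Taking the supremum over $P_1 \in \mc{P}_1$ of the uniform bound from the previous step then yields (\ref{est_stat}).

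The step requiring the most care is the reduction of the worst-case delay to the fresh-start expected delay together with the uniformity of the final estimate over $\mc{P}_1$: one must verify the monotonicity/coupling property of the reflected recursion and confirm that the drift lower bound $\Delta$ and the $O(1)$ overshoot control are independent of the particular $P_1 \in \mc{P}_1$, which is precisely what allows the single inequality to survive the supremum. The random-walk comparison and the Wald step are routine once the increments are known to be bounded.
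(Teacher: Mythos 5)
Your proposal is correct and follows essentially the same route as the paper: the paper's proof simply asserts $\WADD{P_0,P_1}{\tau(\td{\Lambda}^{\mu_0,\eta},\td{b}'_\alpha)} = \frac{\td{b}'_\alpha}{\Delta}(1+o(1))$ by citing renewal theory and then substitutes (\ref{thr_set2}) and (\ref{balpha}), exactly as you do in your final step. Your Lorden-style reduction to the fresh-start delay, the comparison $\td{\Lambda}^{\mu_0,\eta}(t)\geq S_t$, and the Wald/overshoot bound uniform over $\mc{P}_1$ are just an explicit (and careful) unpacking of that cited renewal-theoretic fact.
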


\begin{proof}

For any $P_1 \in \mc{P}_1$, as $\alpha \to 0$,
\begin{equation}
\begin{split}
    \WADD{P_0,P_1}{\tau(\td{\Lambda}^{\mu_0,\eta}, \td{b}'_\alpha)} &= \frac{\td{b}_\alpha'}{\Delta} (1+o(1))\\
    &= \frac{\td{b}_\alpha}{\Delta R_0^2} (1+o(1))\\
    &= \frac{|\ln{\alpha}| \sigma_0^2}{2\Delta^2 R_0^2}(1+o(1))
\end{split}
\end{equation}
where the first line is by renewal theory \cite{vvv_qcd_overview}. \qedhere
\end{proof}

\begin{remark}
As $\Delta \to 0$, $R_0 \to 1$. Thus, the result above becomes
\begin{equation}
    \sup_{P_1\in \mathcal{P}_1} \WADD{P_0,P_1}{\tau(\td{\Lambda}^{\mu_0,\eta}, \td{b}'_\alpha)} \sim \frac{|\ln{\alpha}| \sigma_0^2}{2\Delta^2}(1+o(1))
\end{equation}
where $o(1)$ goes zero as $\alpha$ and $\Delta$ go to zero, which coincides with the minimax robust worst-case delay in (\ref{est_opt_delay}).
\end{remark}




\section{Numerical Results and Discussion}
 We study the performance of the proposed tests through simulations for the case where the pre- and post-change distributions are beta(4,16) ($\mu_0 = 0.2$) and beta(4.5,16) ($\mu_1 = 0.2195$), respectively. The mean-threshold $\eta$ is set to be $0.21$. In particular, we compare the performances for the following three statistics:
\begin{enumerate}
    \item The CUSUM statistic that knows both the pre- and post-change distributions, defined in (\ref{cusum_stat}).
    \item The statistic when only the pre-change distribution is known,  defined in (\ref{opt_stat}).
    \item The MCT statistic defined in (\ref{stat}).
\end{enumerate}
For all three statistics, based on their recursive structure, it is easy to show that the worst-case value of the change-point  for computing WADD  in \eqref{orig_qcd} is $\nu=0$. Therefore we can estimate the worst-case delay by simulating the post-change distribution from time 0.
\begin{figure}[htbp]
\centerline{\includegraphics[width=.5\textwidth,height=5.5cm]{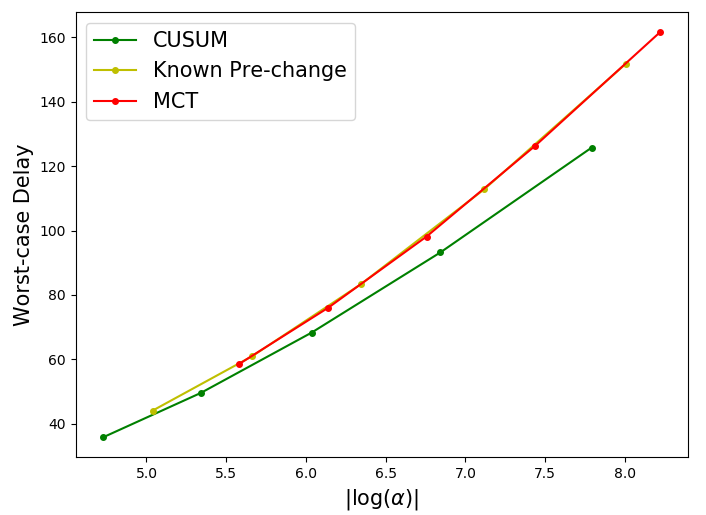}}
\vspace{-3mm}\caption{Performances of different statistics. The pre- and post-change distribution are Beta(4,16) ($\mu_0 = 0.2$) and Beta(4.5,16) ($\mu_1 = 0.2195$), respectively. The mean-threshold $\eta = 0.21$. .}
\label{fig1}
\end{figure}

We see in Fig. \ref{fig1} that the  performance of MCT is very close to the test that uses knowledge of the entire pre-change distribution. Note that the MCT statistic only uses the pre-change mean; the variance is needed for setting the threshold to meet a given FAR constraint.
\begin{figure}[htbp]
\centerline{\includegraphics[width=.5\textwidth,height=5.4cm]{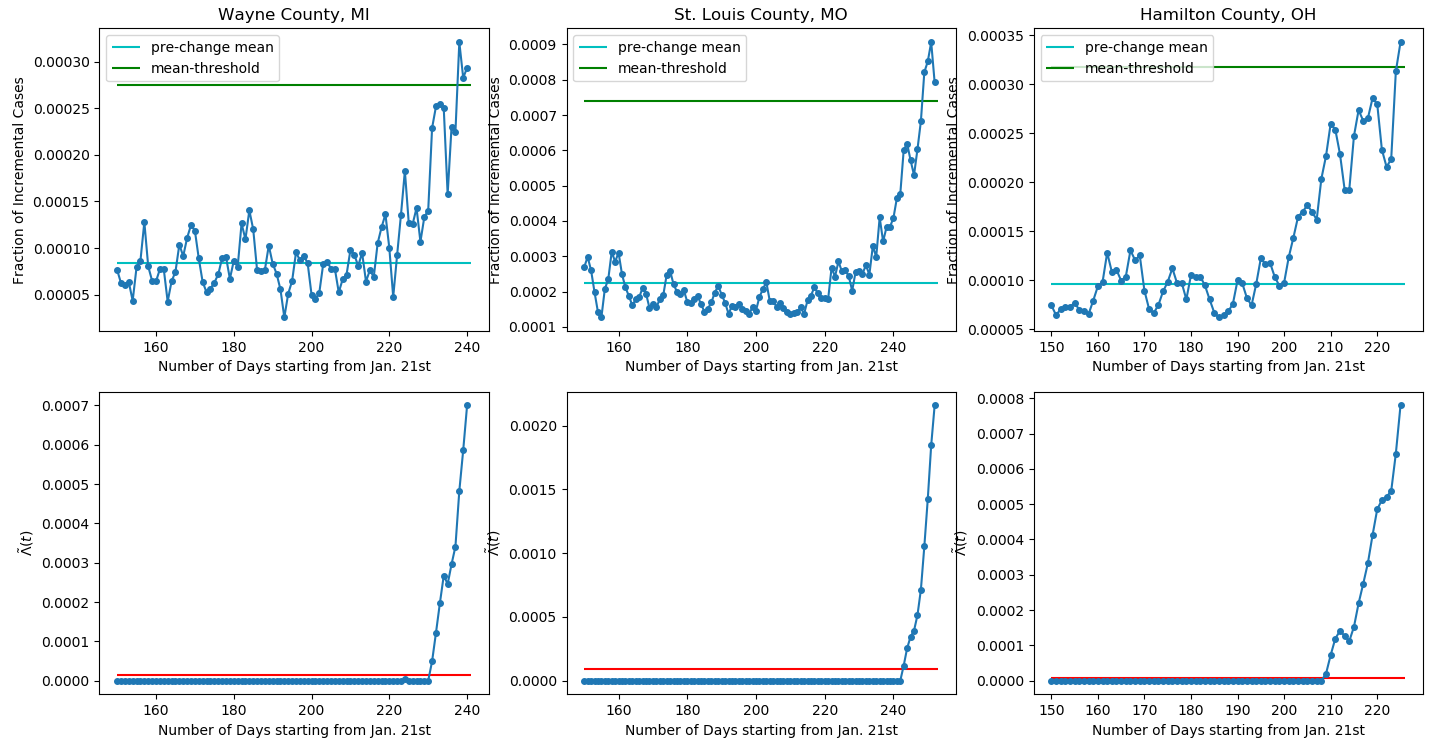}}
\vspace{-3mm}\caption{COVID-19 monitoring example. The upper subplot is the three-day moving average of the new cases  of COVID-19 as a fraction of the population in Wayne County, MI (left), St. Louis County, MO (middle), and Hamilton County, OH (right). The x-axis is the number days elapsed after January 21, 2020. The pre-change mean and variance are estimated using data from days 120 to 150. The FAR threshold $\alpha$ is set to $0.01$. For each county, the mean-threshold $\eta$ (in green) is set to be 3.3 times of the estimated pre-change mean (in cyan). The lower subplot shows the evolving of the statistic $\Tilde{\Lambda}$ in the corresponding county. The $\Lambda$-threshold $\td{b}_\alpha$ (in red) is calculated using equation (\ref{balpha}).}
\label{fig2}
\end{figure}
In Fig. \ref{fig2}, we apply the MCT to monitoring the spread of COVID-19 using new case data from various counties in the US \cite{nyt-covid-data}. The incremental cases from day to day can be assumed to be roughly independent.  The goal is to detect the onset of a new wave of the pandemic based on the incremental cases as a fraction of the county population exceeding some pre-specified level. 
The pre-change mean and variance are estimated using observations for periods in which the increments remain low and roughly constant. We set the mean-threshold $\eta$ to be a multiple of the pre-change mean, with understanding that such a threshold might be indicative of a new wave.  With this choice, we observe that 
the MCT statistic significantly and persistently crosses the test-threshold around late November in all counties, which is strong indication of a new wave of the pandemic. More importantly, unlike the raw observations which are highly varying, the MCT statistic shows a clear dichotomy between  the pre- and post-change settings, with the statistic staying near zero before the purported onset of the new wave, and taking off nearly vertically after the onset.
%


%

\bibliographystyle{IEEEtran}
\bibliography{ref}

\end{document}